\title{\LARGE \bf
Simultaneous improvement of control and estimation for battery management systems
}
\DeclareMathAlphabet{\mathpzc}{OT1}{pzc}{m}{it}
\theoremstyle{definition}
\newtheorem{corollary}{Corollary}
\newtheorem{lemma}{Lemma}
\newtheorem{remark}{Remark}
\newtheorem{problem}{Problem}
\newcommand{\E}{\mathbb{E\,}}
\newcommand{\Cov}{\textbf{Cov}}
\newcommand{\doi}[1]{\href{http://dx.doi.org/#1}{\normalsize{\textsc{doi:}}~\nolinkurl{#1}}}
\newcommand{\arxiv}[1]{\href{http://arxiv.org/abs/#1}{\normalsize{\textsc{arxiv:}}~\nolinkurl{#1}}}
\renewcommand{\epsilon}{\varepsilon}
\newcounter {dagger} 
\author{Mohammad S. Ramadan$^{1} _\star$,\,Marfred Barrera$^{2} _\star$,\,Mihai Anitescu$^1$,\,Sylvia Herbert$^2$
\thanks{$1$ Mathematics and Computer Science Division, Argonne National Laboratory, Lemont, IL 60439, USA,  {\tt\footnotesize mramadan@anl.gov, anitescu@mcs.anl.gov.}}
\thanks{$2$ Mechanical and Aerospace Engineering at UC San Diego, La Jolla, CA 92093, USA,  {\tt\footnotesize mrbarrera@ucsd.edu, sherbert@ucsd.edu.}}
\thanks{
This material was based upon work supported by: the U.S. Department of Energy, Office of Science, Office of Advanced Scientific Computing Research (ASCR) under Contract DE-AC02-06CH11347; and ONR YIP (\#N00014-22-1-2292) and the UCSD JSOE Early Career Faculty Award. 
}%
\thanks{$\star$ Equal contribution}
}
\begin{document}
\maketitle

\begin{abstract}
Standard battery management systems treat the control and state estimation problems as decoupled objectives, relying on certainty equivalence controllers that are blind to the varying observability induced by nonlinear open-circuit voltage models. In this paper, we show that for a broad class of objectives, including the peak shaving and valley filling scenarios common in grid-connected energy storage, the expected cost of a stochastic battery system can be exactly parametrized by the conditional mean and covariance of the state of charge. This reformulation reveals a direct coupling between the control input and estimation quality, a coupling that certainty equivalence controllers ignore, and motivates a dual-control approach in which the controller actively reduces estimation uncertainty by driving the state to high observability regions without compromising the control objective. We derive a deterministic surrogate to this stochastic cost and pose the dual-control problem as a computationally tractable model predictive control problem. We validate our approach on a nine-battery system tracking a time-varying power/demand reference trajectory. We report simultaneous improvements in control cost (up to 20\% reduction) and state estimation error (up to 30\% reduction). The estimation improvement is reported across different state estimators: extended Kalman filter, unscented Kalman filter, and a moving horizon estimator, confirming that the estimation improvement of our approach is not restricted to a specific state observer.
\end{abstract}

\section{Introduction}
The US battery capacity increased by $66\%$ in 2024 alone and continued to grow further in 2025 \cite{EIA_report}. With this rapid growth in the usage of batteries for the power grid, electric vehicles, and massive AI infrastructure and data center projects \cite{Economist_report}, the development of battery management systems for safe and reliable operation is essential \cite{Han2019}. In power grid applications, safe and reliable operation heavily relies on energy management strategies like peak shaving and valley filling, which require batteries to precisely track dynamic state-of-charge (SOC) reference trajectories to balance supply and demand. An important requirement for executing these strategies lies in an accurate SOC estimation, which is a direct representation of the remaining charge within the battery. However, SOC cannot be measured directly and must be inferred from indirect observation \cite{ting2014BMS}.

The challenge of SOC estimation for battery management systems invites methodologies from control theory, state estimation, and, in particular, Kalman filtering derivatives and general observer designs. Accurate modeling of battery dynamics and the nonlinearity of the open circuit voltage~(OCV) models necessitates additional considerations beyond standard estimation techniques. The OCV-SOC relationship is most often parametrized through nonlinear functions, where the parameters are obtained by empirical means \cite{sundaresan2022tabular,pattipati2014open}. Therefore, many nonlinear estimation algorithms and adaptive filters have been developed to interpret the nonlinear behavior of the SOC, including extended Kalman filters (eKFs), Luenberger observers, sliding-mode observers, and proportional-integral observers \cite{he2013adaptive, hu2010estimation, kim2013second, xu2013state}.

A critical but often neglected challenge in OCV-based estimation 
is that the observability of the SOC is not uniform across the 
state-space. OCV-SOC curves for many common battery chemistries 
exhibit voltage plateaus (regions with flat curve where the 
sensitivity of OCV to changes in SOC is 
near zero).\footnote{Observability in our context refers to the 
complete stochastic observability as in \cite{liu2011stochastic}. 
Nearly constant OCV curves admit near-zero output to state
sensitivity, impairing the identifiability and observability 
of the state from the acquired measurements.} In these regions, any state observer, regardless 
of its structure, receives measurements that carry 
little information about the true SOC, leading to degraded 
estimation accuracy and potentially unsafe
operation. This is particularly consequential in peak 
shaving and valley filling, where the SOC is driven across 
a wide range of operating points, passing through both 
high- and low-observability regions of the OCV curve, and 
where estimation errors directly translate into reference 
tracking errors and suboptimal energy dispatch.

Although there is extensive work on the improvement of SOC 
estimation using different battery models, advanced observers, 
and more efficient numerical methods, the existing literature 
typically considers the control and estimation aspects of 
battery management as separate and decoupled objectives. 
Certainty equivalence controllers, for example, optimize 
the control objective while treating the state estimate as 
if it were exact, and are therefore entirely blind to the 
varying observability of the OCV curves and its effect on 
estimation quality. In this paper, we show that this 
decoupling is not simply a design simplification, but a 
missed opportunity. By ignoring the dependence of estimation 
quality on the current trajectory, certainty equivalence 
controllers may blindly steer the system into poorly 
observable regions of the OCV curve, amplifying estimation 
errors precisely when accurate SOC knowledge is most needed.

We address this by posing the problem of battery management system as a stochastic optimal control problem \cite{aastrom2012introduction,feldbaum1960dual}. Rather 
than decoupling the control and estimation objectives, the proposed 
controller optimizes a cost that explicitly incorporates the
estimation quality through the state covariance, 
leading to the emergence of the well-known dual effect 
(caution and probing; see \cite{tse1973wide}) of stochastic 
optimal control. As a consequence, the controller not only 
satisfies the tracking objective but simultaneously steers 
the battery trajectories toward high-observability regions 
of the OCV curve, improving the quality of future 
measurements and reducing estimation error 
\cite{ramadan2024extended}. In particular, we present 
the following contributions:
\begin{itemize}
    \item We show theoretically that the problem of battery system management, due to the common nonlinear nature of the OCV functions, can be posed under the stochastic optimal control umbrella.
    \item We derive a computationally tractable deterministic surrogate to the stochastic optimal control cost that can be posed as a model predictive control (MPC) problem.
    \item Through a simulation on a nine-battery system tracking a time-varying power-demand reference to imitate peak shaving and valley filling scenarios, we show that the proposed approach achieves up to $20\%$ reduction in tracking cost and $30\%$ reduction in estimation error compared to certainty equivalence control.
    \item We demonstrate filter independence by showing consistent improvement in estimation accuracy across an eKF, an unscented Kalman filter (uKF), and a moving horizon estimator (MHE) while simultaneously reducing the control cost. 
\end{itemize}
 
\section{Battery systems dynamics and observation models of different battery chemistries}
We begin by reviewing standard approaches to modeling battery systems.
The Coulomb counting method is typically used to model the battery dynamics in propagating its SOC. As discussed previously, the main drawbacks of this method come from the requirement of a high-accuracy initial condition, and the issue of sensor drift over time. These drawbacks will be mitigated in later sections by our proposed approach.

Coulomb counting is the integration of current drawn from or supplied to a battery over time, $\tau \in [t_0, t]$, which gives the state of charge as (for a battery indexed by $i$, $i \in \{1,2,\hdots,n\}$, $n$ being the number of batteries in the system)
\begin{equation}
SOC^i(t) = SOC^i(t_0) - \int_{t_0}^t\dfrac{\eta^i I^i(\tau)}{Q^i_{nom}}d\tau,\label{eq:colCount}
\end{equation}
where $I^i(t)$ is the current, $Q^i_{nom}$ is the nominal battery capacity, and $\eta^i$ is the battery efficiency. In discrete-time, we represent the dynamics as follows
\begin{equation}
    x^i_{k+1}=x^i_k + \frac{\eta^i \Delta t}{Q^i_{nom}}I^i_k + w^i_k\label{eq:discreteColCount}
\end{equation}
where $x^i_k=SOC(t_0 + k \Delta t)$, for some sampling-time $\Delta t$, and $w^i_k$ is an additive process disturbance, representative of uncertainties such as integration errors, parasitic loads, temperature effects, and fluctuations in power supply. 

Due to the stochasticity of the dynamics \eqref{eq:discreteColCount} and the initial condition $x^i_0$, the estimate of the SOC is refined by an OCV-SOC mapping, which is typically a nonlinear curve represented by 
\begin{equation}
    y^i_k = h^i(x^i_k) + v^i_k \label{eq:OCV}
\end{equation}
where $v^i_k$ is an additive measurement noise. The nonlinear function $h^i(x^i_k)$ can be modeled in many different forms, including a polynomial fit \cite{zhao2017observer}, exponential fit \cite{lazreg2020lithium}, or sum of sines \cite{zhang2018OCV}, depending on the configuration and chemistry of the underlying battery. Because it is difficult to obtain an analytic relationship between the OCV and the SOC, the parameters of these curve-fittings are typically determined by experiments. A variety of OCV-SOC curves can be found in \cite{zhao2017observer}, \cite{zhang2018OCV}, \cite{wang2021unscented}, and \cite{stroe2018LTO}. 

The OCV-SOC mapping $h^i(x^i_k)$ depends on many different factors such as battery chemistry, temperature, state of health, and curve fitting techniques, but the functions are typically monotonic and differentiable \cite{Pillai2022OCV}. However, the voltage plateau and nonlinearities, which are common within these OCV-SOC curves, degrade or vary observability. Thus, we seek to apply methods of stochastic optimal control to improve both state estimation and control performance.

\section{Stochastic optimal control and its deterministic surrogate}
We consider a multi-battery system of the following configuration
\begin{subequations}\label{eq:stateSpace}
\begin{align}
x_{k+1}&=f(x_k, I_k) + w_k, \label{eq:stateDynamics} \\
y_k &= h(x_k) + v_k, \label{eq:outputDynamics}
\end{align}
\begin{align*}
    x_k &= \left [x^1_k,x^2_k,\hdots x^n_k\right]^T,\quad I_k = \left [I^1_k,I^2_k,\hdots I^n_k)\right]^T\\
    y_k &= \left [y^1_k,y^2_k,\hdots y^n_k\right]^T,\\
    f(x_k, I_k) &= x_k + \text{diag}( \left \{\frac{\eta^i \Delta t}{Q_{nom}^i}:\, i=1,2,\hdots,n \right \}) I_k,\\
    h(x_k) &= \left [h^1(x^1_k),h^2(x^2_k),\hdots h^n(x^n_k)\right]^T,
\end{align*}
\end{subequations}
where the additive noises $w_k\in\mathbb R^{n}$ and $v_k\in\mathbb R^{n}$ are the stacked vectors of $w_k^i$ and $v_k^i$, and are each independent and identically distributed with zero means and diagonal covariance matrices $\Sigma_w \succeq 0$ and $\Sigma_v \succ 0$ (positive semi-definite and positive definite, respectively). They are mutually independent from each other and from $x_0$. As we intend to design a receding-horizon control, the current time-step will be denoted by $k_0$, and the initial (current) state $x_{k_0}$ is assumed to have bounded mean $x_{k_0 \mid k_0}$ and covariance $\Cov(x_{k_0}) = \Sigma_{k_0 \mid k_0}$.

As reference trajectory tracking is a common objective in battery systems, we consider a cost of the following form (the arguments of $J^N_{k_0}(X_{k_0}, U^{k_0+N}_{k_0})$ are suppressed for compactness)
\begin{align}
    J_{k_0}&= \E \sum_{k=k_0}^{k_0 + N} \Big ( c \left  (q^\top x_k - r_k\right )^2 + I_k ^\top R I_k + \nonumber\\
    &\hskip 5mm c_0 \cdot \sum_{\substack{1 \leq i  \leq n \\ i < j \leq n}} (x_{k}^i-x_{k}^j)^2 \Big), \label{eq:costRaw}
\end{align}
where $U_{k_0}^{k_0+N}=\{I_{k_0},\hdots,I_{k_0+N}\}$, the tuple $X_{k_0}=( x_{k_0 \mid k_0},\Sigma_{k_0 \mid k_0})$, $q=\left [q_1,\hdots,q_n\right ]$ has positive elements representing the nominal energy capacity of each battery, $c,c_0 \geq 0$, and the matrix $R\succ 0$ (positive definite symmetric). The expectation is over all the random variables $x_{k_0}$, $w_k$ and $v_k$, for $k=k_0,k_0+1,\hdots,k_0+N$. The inner double-summation is a similarity/uniformity cost to penalize unequal participation in storage between batteries.

We assume the following sequence of events: at the beginning of each time step $k > k_0$, $y_k$ is observed prior to the execution of the control action $u_k$. That is, the most up-to-date causal control law $u_k$ is a function of the accessible data up to and including time $k$. In particular, $I_k = I_k (Z^{k}_{k_0})$, where $Z^{k}_{k_0}=\{X_{k_0},Y_{k_0}^{k},U_{k_0}^{k-1}\}$, $Y_{k_0}^{k} = \{y_{k_0+1},\hdots,y_{k}\}$, $U_{k_0}^{k-1}= \{I_{k_0},\hdots,I_{k-1}\}$. For notational consistency, we denote the current time information by $Z^{k_0}_{k_0}=\{X_{k_0}\}$.

Note that at each current time $k_0$, all previous input/output data required to characterize the future ($k \geq k_0$) state distribution are assumed to be captured by $X_{k_0}$. That is, the distribution of $x_{k_0}$, given all previous information, is assumed to be captured by a Gaussian of mean $ x_{k_0 \mid k_0}$ and covariance $\Sigma_{k_0 \mid k_0}$. This is a property of Bayesian filtering, assuming that a Gaussian density is sufficiently accurate to capture the distribution of $x_{k_0}$.

We first show that the cost \eqref{eq:costRaw} can be rewritten\footnote{The left expectation is $\E\{\cdot \} = \int \cdot\, p(x_k) dx_k$ with respect to the density function $p(x_k)$, which by the Markov property, $p(x_k)=p(x_{k_0})p(x_{k_0+1}\mid x_{k_0})\hdots p(x_k,x_{k-1})$, and the initial density $p(x_{k_0})$ is characterized by $X_{k_0}$. The right expectation, however, $\E \left \{ \cdot \right \} = \int \cdot\, p(Y^k_{k_0} dY^k_{k_0}.$. That is, we condition on measurements and then marginalize over them as in \eqref{eq:smoothing_theorem}.} in terms of the conditional means and covariances of $x_k$
\begin{lemma} \label{lemma:smoothing_thm}
The term $\E \left  (q^\top x_k - r_k\right )^2$ can be equivalently expressed by
\begin{align}
    \E \left  (q^\top x_k - r_k\right )^2 = \E \left \{ \left  (q^\top x_{k \mid k} - r_k\right )^2 + q^\top \Sigma_{k \mid k} q\right \}, \label{eq:lemma1 result}
\end{align}
where $x_{k \mid k}$ and $\Sigma_{k \mid k}$ are the mean and covariance of $x_k$, conditioned on the information $Z^k_{k_0}$,
\begin{equation}
\begin{aligned}\label{eq:filterMeanCov}
     x_{k \mid k} &= \E \left \{ x_k \mid Z^k_{k_0}\right \},\\
    \Sigma_{k \mid k} &= \E \left \{ [x_k-x_{k \mid k}][x_k-x_{k \mid k}]^\top \mid Z^k_{k_0} \right \}.
\end{aligned}
\end{equation}
(The notation $()_{k \mid j}$, as is typical in state estimation literature \cite[Ch.~3]{anderson2012optimal}, denotes a random variable at time $k$ conditioned on the information $Z^j_{k_0}$, for $j \geq k_0$.)
\end{lemma}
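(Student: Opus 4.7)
The plan is to apply the tower property of conditional expectation with respect to the filtration $Z^k_{k_0}$, and then exploit the standard decomposition of a mean-square error into squared bias plus variance. The key observation is that, by their very definitions in \eqref{eq:filterMeanCov}, both $x_{k \mid k}$ and $\Sigma_{k \mid k}$ are $Z^k_{k_0}$-measurable, and $r_k$ is a deterministic reference; hence all three may be treated as constants when conditioning on $Z^k_{k_0}$.

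Concretely, I would first write
\begin{equation*}
\E (Q^\top x_k - r_k)^2 = \E\bigl[\,\E[(Q^\top x_k - r_k)^2 \mid Z^k_{k_0}]\,\bigr],
\end{equation*}
and then, inside the inner expectation, add and subtract $Q^\top x_{k \mid k}$ to obtain
\begin{align*}
(Q^\top x_k - r_k)^2 &= \bigl(Q^\top(x_k - x_{k \mid k})\bigr)^2 \\
&\quad + 2\,\bigl(Q^\top(x_k - x_{k \mid k})\bigr)\bigl(Q^\top x_{k \mid k} - r_k\bigr) \\
&\quad + (Q^\top x_{k \mid k} - r_k)^2.
\end{align*}
Taking the conditional expectation piece by piece: the first term equals $Q^\top \E[(x_k - x_{k \mid k})(x_k - x_{k \mid k})^\top \mid Z^k_{k_0}] Q = Q^\top \Sigma_{k \mid k} Q$ by \eqref{eq:filterMeanCov}; the cross term vanishes because $\E[x_k - x_{k \mid k} \mid Z^k_{k_0}] = 0$ while $(Q^\top x_{k \mid k} - r_k)$ is measurable and factors out; the last term is already $Z^k_{k_0}$-measurable and passes through unchanged. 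Applying the outer expectation over $Z^k_{k_0}$ — equivalently, marginalizing over $Y^{k}_{k_0}$ as indicated in the footnote preceding the lemma — then yields \eqref{eq:lemma1 result}.

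No serious obstacle is anticipated, since the identity is nothing more than the conditional form of the classical variance/squared-bias decomposition. The only point that warrants care is the $Z^k_{k_0}$-measurability of $x_{k \mid k}$ and, crucially, of $\Sigma_{k \mid k}$: in a general nonlinear (e.g.\ extended Kalman) setting the conditional covariance is itself a random variable depending on the realized measurement sequence, so one must keep $\Sigma_{k \mid k}$ inside the outer expectation rather than pulling it out as a deterministic quantity — which is precisely the form in which \eqref{eq:lemma1 result} is stated.
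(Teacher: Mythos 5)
Your proposal is correct and follows essentially the same route as the paper's proof: both apply the smoothing theorem (tower property) to condition on $Z^k_{k_0}$, then decompose $x_k$ into the conditional mean $x_{k \mid k}$ plus the estimation error and discard the zero-mean cross term to obtain the squared-bias-plus-variance form. Your added remark that $\Sigma_{k \mid k}$ must remain inside the outer expectation because it depends on the realized measurements is a useful clarification, but it does not change the argument.
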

\begin{proof}
By the smoothing theorem \cite[p.~348]{resnick2019probability}, conditioning each additive term in \eqref{eq:costRaw} on the information available at that corresponding time $k$ (respecting causality), we have
\begin{align}
    &\hskip -5mm\E \left \{ \left  (q^\top x_k - r_k\right )^2 \right \} = \int \left  (q^\top x_k - r_k\right )^2 p(x_k)dx_k,\nonumber\\
    &= \int\int \left  (q^\top x_k - r_k\right )^2 p(x_k,Y^k_{k_0})d\,Y^k_{k_0} dx_k,\nonumber\\
    &= \int \left (\int\left  (q^\top x_k - r_k\right )^2 p(x_k \mid Y^k_{k_0}) dx_k \right )p(Y^k_{k_0}) Y^k_{k_0},\nonumber\\
    &=\E \left \{ \E \left \{ \left  (q^\top x_k - r_k\right )^2  \mid Z^k_{k_0}\right\}\right \}.\label{eq:smoothing_theorem}
\end{align}
It is left to show that $\E \left \{\left  (q^\top x_k - r_k\right )^2  \mid Z^k_{k_0}\right\} =  \left  (q^\top x_{k \mid k} - r_k\right )^2 +  q^\top \Sigma_{k \mid k} q$. We first decompose the state $x_k$ into $x_k =  x_{k \mid k} + \tilde x_k$, where $\tilde x_k$ is the estimation error. The quadratic term $\left  (q^\top x_k - r_k\right )^2 = ( \left  (q^\top (x_{k \mid k} + \tilde x_k) - r_k\right )^2$, which under the conditional expectation $\E \left \{ \cdot \mid Z^k_{k_0} \right \}$, and after ignoring the zero mean cross-terms, leads to the right-hand-side of \eqref{eq:lemma1 result}. 
\end{proof}

\begin{lemma} \label{lemma:smoothing_thm similarity cost}
The double-summation $\E  \sum_{\substack{1 \leq i  \leq n \\ i < j \leq n}} (x_{k}^i-x_{k}^j)^2$ can be equivalently expressed by
\begin{align}
    \E \sum_{\substack{1 \leq i  \leq n \\ i < j \leq n}} (x_{k}^i-x_{k}^j)^2 &=  \sum_{\substack{1 \leq i  \leq n \\ i < j \leq n}} \Big \{ (\Sigma_{k \mid k}^{i,i}-2\Sigma_{k \mid k}^{i,j} + \Sigma_{k \mid k}^{j,j}) + \nonumber\\
    &\hskip 15mm (x_{k \mid k}^i-x_{k \mid k}^j)^2 \Big \}, \label{eq:lemma2 result}
\end{align}
where $\Sigma_{k \mid k}^{i,j}$ is the element at the $i^{th}$ row and $j^{th}$ column of the (positive semi-definite symmetric) covariance matrix $\Sigma_{k \mid k}$. Similarly to Lemma~\ref{lemma:smoothing_thm} in \eqref{eq:smoothing_theorem}, the expectation to the left is with respect to $p(x_k)$ and the one on the right is with respect to $p(Y_{k_0}^k)$.
\end{lemma}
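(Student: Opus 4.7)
The plan is to mimic the proof of Lemma~\ref{lemma:smoothing_thm} almost verbatim, applying the smoothing theorem pair-by-pair and then the mean/covariance decomposition of the estimation error. First I would use linearity of expectation to pull the finite double sum outside of $\E$, so that it suffices to establish the identity for a single generic pair $(i,j)$ with $i<j$, namely
\begin{equation*}
\E\,(x_k^i-x_k^j)^2 = \E\Big\{(x_{k\mid k}^i - x_{k\mid k}^j)^2 + \Sigma_{k\mid k}^{i,i} - 2\Sigma_{k\mid k}^{i,j} + \Sigma_{k\mid k}^{j,j}\Big\}.
\end{equation*}

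Next I would invoke the smoothing theorem exactly as in \eqref{eq:smoothing_theorem}, writing $\E(x_k^i-x_k^j)^2 = \E\{\E\{(x_k^i-x_k^j)^2 \mid Z^k_{k_0}\}\}$, where the outer expectation is taken with respect to $p(Y^k_{k_0})$ and the inner conditional expectation is evaluated at fixed measurements. This step is fully justified because $(x_k^i - x_k^j)^2$ is an integrable function of $x_k$ alone, so the tower property of conditional expectation applies directly.

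Then I would carry out the decomposition $x_k^i = x_{k\mid k}^i + \tilde x_k^i$ and $x_k^j = x_{k\mid k}^j + \tilde x_k^j$, where the estimation errors satisfy $\E\{\tilde x_k^i \mid Z^k_{k_0}\} = 0$ and $\E\{\tilde x_k^i \tilde x_k^j \mid Z^k_{k_0}\} = \Sigma_{k\mid k}^{i,j}$ by definition \eqref{eq:filterMeanCov}. Expanding the square,
\begin{equation*}
(x_k^i - x_k^j)^2 = (x_{k\mid k}^i - x_{k\mid k}^j)^2 + 2(x_{k\mid k}^i - x_{k\mid k}^j)(\tilde x_k^i - \tilde x_k^j) + (\tilde x_k^i - \tilde x_k^j)^2,
\end{equation*}
and taking the inner conditional expectation kills the cross term (since $x_{k\mid k}^i - x_{k\mid k}^j$ is $Z^k_{k_0}$-measurable and the errors have zero conditional mean) while the last term yields $\Sigma_{k\mid k}^{i,i} - 2\Sigma_{k\mid k}^{i,j} + \Sigma_{k\mid k}^{j,j}$. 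Re-summing over the pairs $(i,j)$ with $i<j$ and taking the outer expectation produces \eqref{eq:lemma2 result}.

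I do not anticipate a genuine obstacle: the argument is purely algebraic once conditioning is in place, and Lemma~\ref{lemma:smoothing_thm} has already done the subtle bookkeeping of invoking the smoothing theorem. The only point requiring mild care is checking that the conditional cross-covariance term $\E\{\tilde x_k^i \tilde x_k^j \mid Z^k_{k_0}\}$ is indeed the $(i,j)$ entry of the conditional covariance matrix $\Sigma_{k\mid k}$, which follows immediately from the definition in \eqref{eq:filterMeanCov}, and that interchanging the finite sum with the expectation is legitimate by linearity.
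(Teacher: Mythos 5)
Your proposal is correct and follows essentially the same route as the paper's own proof: apply the smoothing (tower) property termwise and then use the decomposition $x_k^i = x_{k\mid k}^i + \tilde x_k^i$ to express the inner conditional expectation via the entries of $x_{k\mid k}$ and $\Sigma_{k\mid k}$. Your write-up is in fact more explicit than the paper's brief sketch about killing the cross term and identifying $\E\{\tilde x_k^i \tilde x_k^j \mid Z^k_{k_0}\}$ with $\Sigma_{k\mid k}^{i,j}$, but there is no substantive difference in approach.
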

\begin{proof}
Analogously to the proof of Lemma~\ref{lemma:smoothing_thm}, applying the smoothing theorem on each element of the summation
\begin{align*}
    \E \left \{ (x_k^i)^2 - 2 x_k^i x_k^j + (x_k^j)^2) \right\} &= \\
    &\hskip -20mm \E \left \{\E \left \{ (x_k^i)^2 - 2 x_k^i x_k^j + (x_k^j)^2) \mid Z_{k_0}^k \right\} \right \},
\end{align*}
then using the decomposition $x^i_k =  x^i_{k \mid k} + \tilde x_k^i$, we can rewrite the inner conditional expectation in terms of the elements of $x_{k \mid k}$ and $\Sigma_{k \mid k}$.
\end{proof}
\begin{corollary}
The cost function \eqref{eq:costRaw} can be equivalently represented by 
\begin{align}
& J_{k_0} = \nonumber \\
     &\E \Big \{ \sum_{k=k_0}^{k_0 + N} \Big [c \left  (q^\top x_{k \mid k} - r_k\right )^2 + c \cdot q^\top \Sigma_{k \mid k} q) + I_k ^\top R I_k \nonumber + \\
     &c_0\sum_{\substack{1 \leq i  \leq n \\ i < j \leq n}} \big \{ \Sigma_{k \mid k}^{i,i}-2\Sigma_{k \mid k}^{i,j} + \Sigma_{k \mid k}^{j,j} + (x_{k \mid k}^i-x_{k \mid k}^j)^2 \big \} \Big ]\Big \}. \label{eq:costWideSense}
\end{align}
\end{corollary}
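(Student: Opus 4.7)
The plan is to exploit the linearity of expectation in \eqref{eq:costRaw} and apply Lemmas~\ref{lemma:smoothing_thm} and~\ref{lemma:smoothing_thm similarity cost} termwise. Since
$J_{k_0} = \sum_{k=k_0}^{k_0+N} \E \{ c (Q^\top x_k - r_k)^2 + I_k^\top R I_k + c_0 \sum_{i<j} (x_k^i - x_k^j)^2 \}$,
it suffices to rewrite each of the three additive components at a fixed $k$ and then reassemble the sum.

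First, Lemma~\ref{lemma:smoothing_thm} converts $\E (Q^\top x_k - r_k)^2$ into $\E \{ (Q^\top x_{k \mid k} - r_k)^2 + Q^\top \Sigma_{k \mid k} Q \}$, where the outer expectation is now over $p(Y^k_{k_0})$. Second, Lemma~\ref{lemma:smoothing_thm similarity cost} replaces the similarity cost by its analogous expression in terms of the entries of $x_{k \mid k}$ and $\Sigma_{k \mid k}$, again with the outer expectation over $p(Y^k_{k_0})$. Third, for the control-effort term $I_k^\top R I_k$, I would apply the smoothing theorem exactly as in \eqref{eq:smoothing_theorem}: by the causality assumption, $I_k = I_k(Z^k_{k_0})$ is $Z^k_{k_0}$-measurable, so $\E \{ I_k^\top R I_k \mid Z^k_{k_0} \} = I_k^\top R I_k$, and therefore $\E \{ I_k^\top R I_k \} = \E \{ \E \{ I_k^\top R I_k \mid Z^k_{k_0} \} \}$, with the outer expectation now consistently taken over $p(Y^k_{k_0})$.

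Having rewritten each of the three components under an expectation with respect to the measurement trajectory $p(Y^k_{k_0})$, I would fold the sum back inside a single outer expectation by linearity, which yields \eqref{eq:costWideSense} directly. I do not foresee a substantive obstacle: the only point worth verifying is that the three rewritten expectations act with respect to the same underlying measure at each $k$. They do, because the smoothing step invoked in both lemmas and in the control-term argument leaves the remaining randomness exactly in $Y^k_{k_0}$, with the randomness of $x_{k_0}$ already encoded through the initial conditional density characterized by $X_{k_0}$.
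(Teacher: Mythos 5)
Your proposal is correct and follows essentially the same route as the paper, which simply applies Lemmas~\ref{lemma:smoothing_thm} and~\ref{lemma:smoothing_thm similarity cost} to each additive term of \eqref{eq:costRaw}. Your additional observation that $I_k$ is $Z^k_{k_0}$-measurable (so the control-effort term passes through the smoothing step unchanged) is a detail the paper leaves implicit, but it does not constitute a different argument.
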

\begin{proof}
A direct consequence of applying Lemmas~\ref{lemma:smoothing_thm} and \ref{lemma:smoothing_thm similarity cost} to each additive term in \eqref{eq:costRaw}.
\end{proof}

\begin{remark}
It is important to emphasize here that the cost parametrization in terms of the lower order statistics, $x_{k \mid k}$ and $\Sigma_{k \mid k}$, in \eqref{eq:costWideSense}, is exactly equivalent to the original cost in \eqref{eq:costRaw}, with no approximations. We employ the first approximation to $J_{k_0}$ in the next subsection.
\end{remark}

\subsection{Mean and covariance dynamics}
We shifted from a cost in the partially observed state $x_k$ in \eqref{eq:costRaw} into a cost in the fully observed 'information state', represented by the pair $(x_{k \mid k}, \Sigma_{k \mid k})$. The latter description is fully observed as it can be tracked and calculated from a Bayesian filter. A natural choice here is the eKF, as it is one of the minimalist algorithms to approximate the evolution of the first two central moments.

In addition to \eqref{eq:filterMeanCov}, we define
\begin{equation}
\begin{aligned} \label{eq:one step ahead}
     x_{k+1 \mid k} &= \E \left \{ x_{k+1} \mid Z^k_{k_0},u_k \right \}, \\
    \Sigma_{k+1 \mid k} &= \E \left \{ [x_{k+1}-x_{k+1 \mid k}][x_{k+1}-x_{k+1 \mid k}]^\top \mid Z^k_{k_0},u_k \right \}.
\end{aligned}  
\end{equation}
The eKF recursion is defined by
\begin{align}
 x_{k+1 \mid k} = f( x_{k \mid k},I_k),\quad \label{eq:eKF_state}
\Sigma_{k+1 \mid k} = F_k \Sigma_{k \mid k} F_k^\top + \Sigma_w, 
\end{align}
where
\begin{equation}
\begin{aligned} \label{eq:eKF_supportVariables}
& x_{k\mid k}= x_{k \mid k-1} + \Omega_{k} \left [y_{k}-h(x_{k \mid k-1})\right],\\
&\Sigma_{k \mid k} = \left [I - \Omega_{k} H_{k} \right]\Sigma_{k \mid k-1},\\
 &\Omega_{k}= \Sigma_{k \mid k-1} H_{k}^\top \left [ H_{k} \Sigma_{k \mid k-1} H_{k}^\top+\Sigma_v\right]^{-1}, \\
 &F_k = \left. \frac{\partial f(x,I_k)}{\partial x} \right | _{x_{k\mid k}} = \mathbb I_{n \times n},\\
 &H_{k} = \left. \frac{\partial h(x)}{\partial x} \right | _{x_{k\mid k-1}},\\
 &\hskip5mm = \text{diag}( \left \{\frac{\partial h^i(x)}{\partial x} | _{x^i_{k\mid k-1}}:\, i=1,\hdots,n \right \}),
\end{aligned}
\end{equation}
and the recursion is initialized by $ x_{k_0 \mid k_0}$, $\Sigma_{k_0 \mid k_0}$, given at the beginning of each time-step.

It is important to note that the eKF results from applying first-order approximations (the Jacobians $F_k$ and $H_k$) on nonlinear systems to reach a similar form to the Kalman filter, with $\Omega_k$ mimicking the Kalman gain \cite{jazwinski2007stochastic}. Hence, although the cost description in \eqref{eq:costWideSense} is exact, the central moments provided by the eKF are only approximates and not exact.

The variables $x_{k \mid k},\Sigma_{k \mid k}$ are random since they are functions of the random observation sequence $Y^k_{k_0}$ (not yet available at the current time $k_0$). The expectation in \eqref{eq:costWideSense} averages over all possible future observations, given by the random variable $Y^k_{k_0}$. Next, we employ an important approximation to omit this expectation, and work with a deterministic surrogate instead.

\subsection{Deterministic surrogate to the cost} \label{subsection:cost approx}
For linear systems, the state covariance propagated by the Kalman filter is independent of the control input; hence, the cost is replaced by its deterministic equivalent in the state mean. This is known as the separation principle \cite{aastrom2012introduction}, which states that optimal control can be separated into deterministic optimal control and optimal filtering. However, this does not hold for general nonlinear systems; notice that the covariance estimates provided by the eKF are dependent on the control input. Therefore, the control input has a role in the estimation quality represented by these covariances. The difficulty remains in the expectation of the cost \eqref{eq:costWideSense} resulting from the conditioning on all future measurements. We omit this expectation for computational purposes and verify in the numerical example that the resulting deterministic surrogate is still beneficial to both control and estimation.

The measurement correction term $[y_{k} - h(x_{k\mid k-1})]$, which resembles the innovation sequence in the filtering context, can be assumed white and independent from $\Omega_{k}$, if the eKF is a sufficiently accurate filter (this is true in various applications (see \cite[Sec.~8.2]{anderson2012optimal}). We define the prediction-only recursion 
\begin{align} \label{eq:CE_step}
     x^p_{k+1} =f( x^p_{k },I_k),
\end{align}
where the state $x^p_{k}$ is the surrogate of both $ x_{k\mid k-1}$ and $x_{k \mid k}$ (since the measurement correction is omitted) that evolves solely through prediction. Although $x^p_k$ does not exactly represent the mean of $x_k$, beyond one step ahead, due to the nonlinearity of $f$; we adopt $x_k^p$ as the approximate mean for the deterministic surrogate of the cost.

We define the prediction only eKF as follows.
\begin{equation}
    \begin{aligned} \label{eq:prediction eKF}
    &x_{k+1}^p = x_k^p + \text{diag}( \left \{\frac{\eta^i \Delta t}{Q_{nom}^i}:\, i=1,2,\hdots,n \right \}) I_k,\\
&\Sigma_{k+1 \mid k}^p = \Sigma_{k \mid k}^p + \Sigma_w,\\
&\Sigma_{k \mid k}^p = \left [I - \Omega_{k} H^p_{k} \right]\Sigma_{k \mid k-1}^p,\\
&\Omega_{k}= \Sigma_{k \mid k-1}^p H_{k}^{p\,\,\top} \left [ H^p_{k} \Sigma_{k \mid k-1}^p H_{k}^{p\,\,\top}+\Sigma_v\right]^{-1},\\
&H^p_{k} = \text{diag}( \left \{\frac{\partial h^i(x)}{\partial x} | _{x^{i,p}_{k}}:\, i=1,\hdots,n \right \}),
    \end{aligned}
\end{equation}
with $F^p_k$ substituted as the identity matrix.

We define the deterministic surrogate to \eqref{eq:costWideSense} in terms of the prediction-only eKF.
\begin{align}
      &\hat J_{k_0} = \sum_{k=k_0}^{k_0 + N} \Big [c \left  (q^\top x_{k}^p - r_k\right )^2 + c \cdot q^\top \Sigma_{k \mid k}^p q + I_k ^\top R I_k \nonumber + \\
     &c_0\sum_{\substack{1 \leq i  \leq n \\ i < j \leq n}} \big \{ \Sigma_{k \mid k}^{p\,\,i,i}-2\Sigma_{k \mid k}^{p\,\,i,j} + \Sigma_{k \mid k}^{p\,\,j,j} + (x_{k}^{p\,\,i}-x_{k}^{p\,\,j})^2 \big \} \Big ].\label{eq:costWideSenseDeterministic}
\end{align}

\subsection{Algorithm}
The objective \eqref{eq:costWideSenseDeterministic} is quadratic in $x_k^p$ and (linear in) $\Sigma_{k \mid k}^p$, but the dynamic constraint \eqref{eq:prediction eKF} is generally nonconvex. This makes the problem belong to the realm of nonlinear optimization, in particular, a nonlinear model predictive control (MPC) problem. We first list the certainty equivalence linear MPC problem that omits the OCV models, then use its solution to compare against that of the nonlinear MPC.

\begin{problem}\label{prob:linearMPC}
Linear MPC in the SOC dynamics (certainty equivalence, no OCV model)
\begin{align}
    &\min_{U_{k_0}^{k_0+N}, \{x_k\}_{k=k_0}^{k_0 + N} }  \E \sum_{k=k_0}^{k_0 + N} \left ( c \left  (q^\top x_k - r_k\right )^2 + I_k ^\top R I_k \right ) \nonumber\\
    &\text{subject to:}\nonumber\\
    &x_{k+1} = x_k + \text{diag}( \left \{\frac{\eta^i \Delta t}{Q_{nom}^i}:\, i=1,2,\hdots,n \right \}) I_k,\nonumber\\
    &I_k^i \in [I^i_{min}, I^i_{max}], \quad x_k^i \in [0,1],\label{eq:constraint I x}\\
    &i=1,2,\hdots,n,\quad k=k_0,k_0+1,\hdots,k_0+N, \nonumber
\end{align}
where $I^i_{min}$ and $I^i_{max}$ are the minimum and maximum allowable currents.
\end{problem}

\begin{problem}\label{prob:nonlinearMPC}
Nonlinear MPC in the information state $(x_k^p, \Sigma^p_{k \mid k})$
\begin{align*}
    &\min_{U_{k_0}^{k_0+N}, \{(x_k^p, \Sigma^p_{k \mid k})\}_{k=k_0}^{k_0 + N} }  \hat J_{k_0}\\
    &\text{subject to \eqref{eq:prediction eKF} and \eqref{eq:constraint I x}.}
\end{align*}
\end{problem}

Problem~\ref{prob:nonlinearMPC} is nonlinear and nonconvex and can be approached by some nonlinear solver, say a Newton method. For simplicity and computational efficiency, we rely on Algorithm~\ref{algorithm:algorithm}, which constructs a linear approximation of the information state dynamics instead of solving the full nonlinear MPC problem. The detailed implementation can be found in our Julia code at \href{https://github.com/msramada/battery-dual-control}{\textit{\small https://github.com/msramada/battery-dual-control}.}

\begin{algorithm}
\small
\caption{Local linearization of Problem~\ref{prob:nonlinearMPC}}\label{algorithm:algorithm}
\begin{algorithmic}[0]
\State At current time $k_0$, given $X_{k_0}=(x_{k_0 \mid k_0},\Sigma_{k_0 \mid k_0})$ from the running eKF;
\State Set $(x^p_{k_0}, \Sigma^p_{k_0 \mid k_0}) \gets (x_{k_0 \mid k_0},\Sigma_{k_0 \mid k_0})$
\State Linearize \eqref{eq:prediction eKF} about
\begin{align*}
    \xi_{k_0} = \begin{bmatrix}
        x^p_{k_0}\\
        \text{up\_vec}(\Sigma^p_{k_0 \mid k_0})
    \end{bmatrix}, \quad I_{k_0} = 0,
\end{align*}
($\text{up\_vec}$ is the vectorized upper-triangle; to avoid repitition due to symmetry) to find $\mathcal{A}, \mathcal{B}$ in
\begin{align} \label{eq:linearized info state dynamics}
    \xi_{k+1} = \mathcal{A} \xi_k + \mathcal{B} I_k;
\end{align}
\State Solve:
\begin{align*}
    &\min_{U_{k_0}^{k_0+N}, \{x_k\}_{k=k_0}^{k_0 + N} }  \hat J_{k_0}\\
    &\text{subject to \eqref{eq:constraint I x}, \eqref{eq:linearized info state dynamics} and}\\
    &\Sigma^p_{k \mid k} \succeq 0, \quad k=k_0,\hdots,k_0+N;
\end{align*}
\State Apply the optimal value of $I_{k_0}$.
\end{algorithmic}
\end{algorithm}

\section{Simulation}

\begin{figure}
    \centering
    \includegraphics[width=1.0\linewidth]{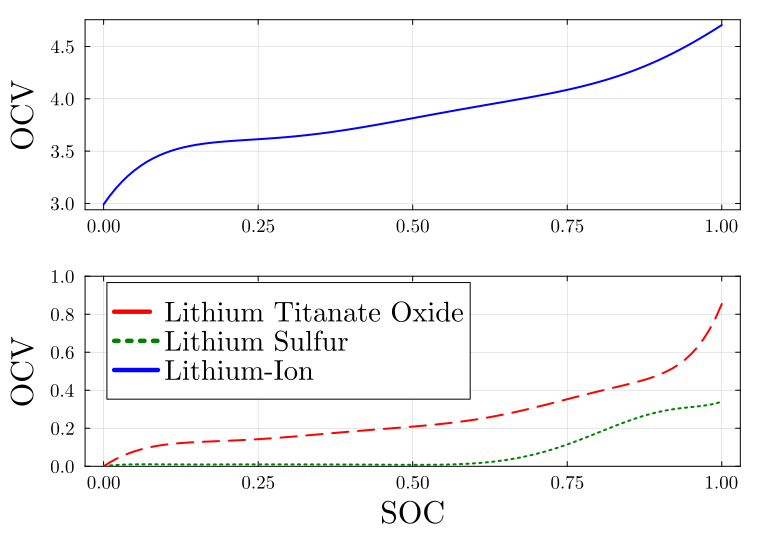}
    \caption{OCV-SOC curves of 3 different battery chemistries}
    \label{fig:ocv_numerical}
\end{figure} 

We test Algorithm~\ref{algorithm:algorithm} in simulation for $n=9$ batteries. The primary goal is to follow a dynamic reference trajectory $r_k$ of the total battery state of charge to meet time-varying power demands dictated by peak shaving/valley filling strategies. This simulation uses 3 of each battery OCV-SOC model shown in Figure \ref{fig:ocv_numerical} for a 9 battery system in total, and each battery SOC is controlled independently. The measurement dynamics, which are plotted in Figure \ref{fig:ocv_numerical}, are drawn from polynomial-fitted OCV-SOC curves previously established in SOC literature (\cite{wang2021unscented, stroe2018LTO,PROPP2017254}), and the exact polynomial coefficients can be found in our previously mentioned Julia code.

\begin{figure*}
    \centering
    \includegraphics[width=0.8\linewidth]{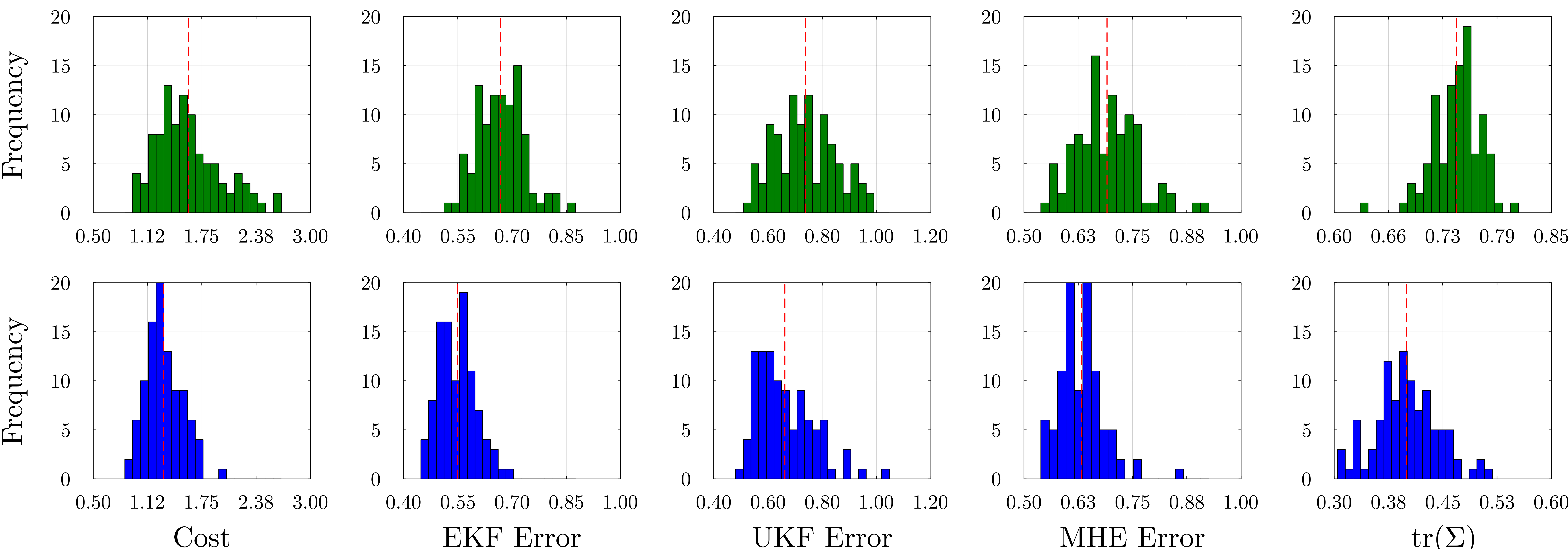}
    \caption{Histograms comparing the sample-average estimation error and cost of 100 simulations. The top row in green represents Linear MPC, the bottom row in blue represents Algorithm \ref{algorithm:algorithm}, and the dashed red line marks the average of the corresponding histogram}
    \label{fig:histograms}
\end{figure*}

\begin{table}[htpb]
    \centering
    \caption{Simulation Parameters }
    \label{tab:sim_params}
    \begin{tabular}{ l c | @{\hskip 0.5cm} l c }
        \toprule
        \textbf{Parameter} & \textbf{Value} & \textbf{Parameter} & \textbf{Value} \\
        \midrule
        $(x_{0|0}, \Sigma_{0|0})$  & $(0.1\textbf{1}_n,0.1\mathbb{I}_{n \times n})$,   & $c_0$          & 0.1 \\
        $\eta$          & 1.0                    & $q$            & $\textbf{1}_n$  \\
        $c$             & 1.0                    & $R$            & $0.1\mathbb{I}_{n \times n}$   \\
        $\Delta t$      & 1.0                    & $N$            & 8   \\
        $Q_o$           & 2.2                    & $T$            & 50 \\
        $c_0$           & 0.1                    & $I_{min}$      & -0.5 \\
                        &                        & $I_{max}$      & 0.5 \\
        \bottomrule
    \end{tabular}
\end{table}

The time-varying reference trajectory to be tracked is $r_k = 5.0+\sin\big(({2\pi}/{T})k\big)$. The parameters of the state dynamics and the cost function are provided in Table \ref{tab:sim_params} ($\textbf{1}_n$ is a vector of ones of length $n$).

To evaluate the performance of our control approach, 100 simulations of this numerical example were run using the certainty-equivalence linear MPC approach described in Problem 1, and another 100 simulations using Algorithm 1. For each of these 200 trajectories, the sample cost is calculated by substitution in \eqref{eq:costRaw}, and the estimation error of different filters is evaluated (since the true hidden state is known in simulation). Histograms of these sample costs and sample estimation errors are reported in Figure \ref{fig:histograms}. Note that Algorithm~1 shifted the distributions to the better, decreasing the average cost by $21.5\%$ and decreasing the average eKF estimation error by $19.3\%$. Note also the improvement in estimation accuracy being independent of the filter type, as we interchange the eKF for other filters like the uKF and MHE for state estimation. 

Furthermore, examining the trajectories of the battery system in Figure \ref{fig:soc_time_domain}, not only do the batteries under Algorithm \ref{algorithm:algorithm} track the reference trajectory more accurately than Linear MPC, but the $2\sigma$ bounds become tighter around $r_k$ as well.


\begin{table*}[htpb] 
    \centering
    \caption{Average achieved cost, state estimation error, and covariance trace between Algorithm \ref{algorithm:algorithm} and certainty-equivalence MPC.}
    \label{tab:average_change}
    \begin{tabular}{l c c c c c c}
        \toprule
        \textbf{Method} & \textbf{Total Cost} & \textbf{Tracking Cost} & \textbf{EKF Error} & \textbf{UKF Error} & \textbf{MHE Error} & \textbf{Cov Trace} \\ 
        \midrule
        Linear MPC & 1.59 & 1.113 & 0.668 & 0.738 & 0.691 & 0.741 \\ 
        \midrule Algorithm~\ref{algorithm:algorithm} & 1.31 (\textbf{-17.8\%}) & 0.797 (\textbf{-28.4\%}) & 0.549 (\textbf{-17.8\%}) & 0.662 (\textbf{-10.3\%}) & 0.632 (\textbf{-8.5\%}) & 0.400 (\textbf{-50.0\%}) \\ 
        \bottomrule
    \end{tabular}
\end{table*}



Notably, the trade-off between estimation accuracy and tracking performance is affected by the reference trajectory $r_k$. Higher "mean" of $r_k$ reduce the redundancy available for charge distribution and limit the algorithm's ability to exploit observable regions. These behaviors indicate that the stochastic optimal control algorithm is most effective when the battery system possesses sufficient resource redundancy, providing a rich feasible set from which the optimization can derive solutions that outperform standard linear MPC.

The enhanced control and estimation provided by Algorithm \ref{algorithm:algorithm} incurs only a slight computational cost compared to standard Linear Quadratic Regulator control, achieving a frequency of 77 Hz during simulation. Additionally, the hours-long time horizons and slow control frequencies that are common for battery system problems allow easy scaling of the number of batteries in the system while still maintaining a sufficient speed for most BMS purposes.
Note that all simulations were run using an ASUS ROG Zephyrus G16 laptop with an Intel i7 processor and 16 GB of RAM. 

\begin{figure}
    \centering
    \includegraphics[width=1\linewidth]{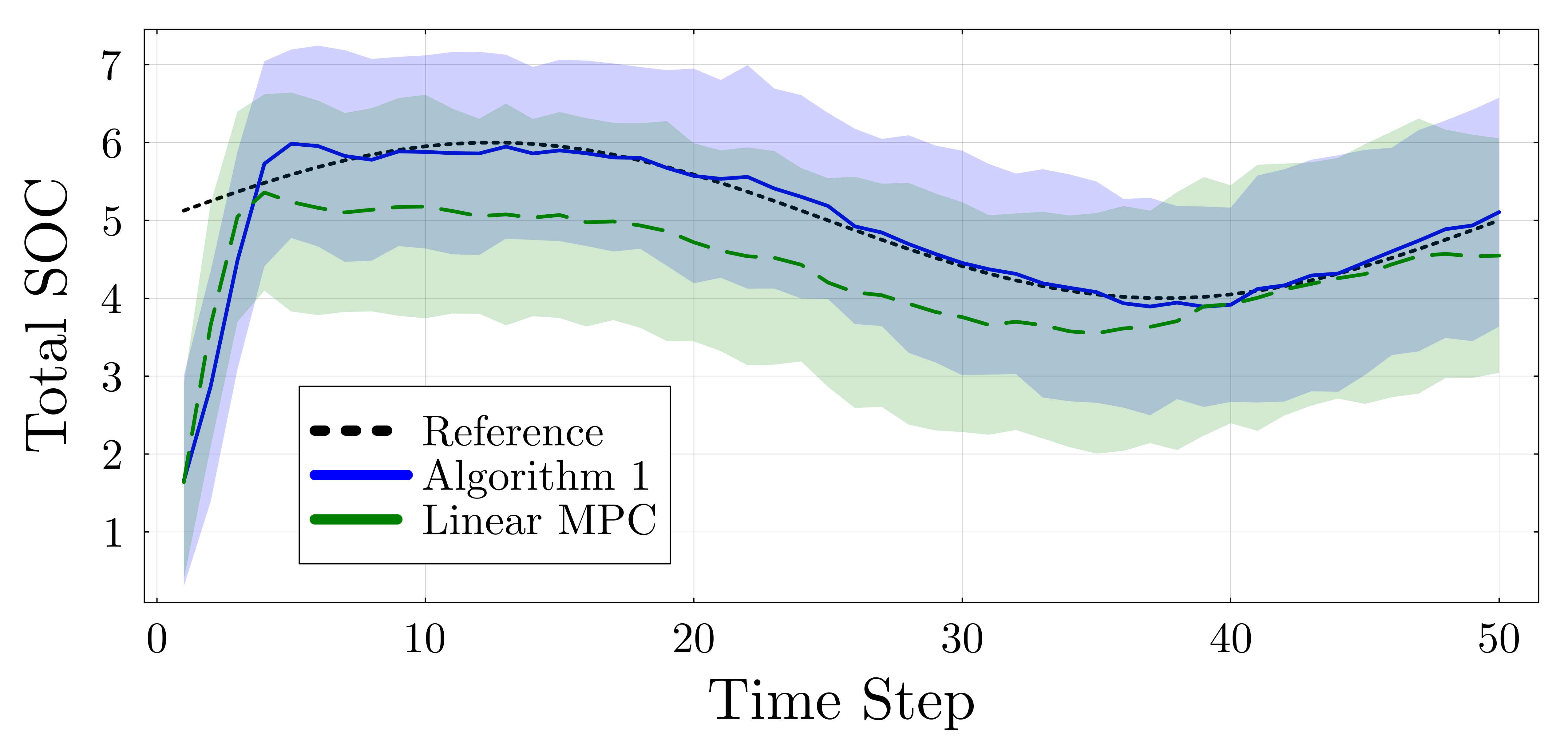}
    \caption{Average trajectories of the total SOC of the battery system for the linear MPC and Algorithm~1, with the $2\sigma$ regions around each average (purple: Algorithm~1, green: Linear MPC, and grey is the overlap of these two regions). }
    \label{fig:soc_time_domain}
\end{figure}





\section{Conclusion} 
In this work, we employ stochastic optimal control on battery management systems to achieve a simultaneous improvement in control and estimation, motivated by the peak shaving and valley filling requirements of grid-connected storage. This is achieved by deriving a deterministic surrogate to the stochastic optimal control cost, parametrized by the mean and covariance of the state of charge, which exposes a direct coupling between the control input and estimation quality that certainty-equivalence controllers ignore. The controller actively steers battery trajectories toward high-observability regions of the OCV curve, reducing estimation uncertainty while simultaneously improving tracking performance. Although our parametrization is independent of the type of filter, we adopt the extended Kalman filter to approximate the information-state dynamics, due to its simplicity. We validate the approach on a nine-battery system tracking a time-varying power-demand reference and report significant reductions in tracking cost and estimation error relative to certainty equivalence MPC. These improvements are consistent across an eKF, a uKF, and an MHE, confirming that the dual-control benefit is agnostic to the choice of state estimator. 

Future work involves the adaptation of other Bayesian filters and estimation algorithms into the proposed stochastic optimal control technique, as well as extending the framework to more sophisticated battery models and grid-level objectives such as frequency regulation and power dispatch.

\bibliographystyle{IEEEtran}
\balance
\bibliography{references}

\vspace{0.1cm}
\begin{flushright}
	\scriptsize \framebox{\parbox{2.5in}{Government License: The
			submitted manuscript has been created by UChicago Argonne,
			LLC, Operator of Argonne National Laboratory (``Argonne").
			Argonne, a U.S. Department of Energy Office of Science
			laboratory, is operated under Contract
			No. DE-AC02-06CH11357.  The U.S. Government retains for
			itself, and others acting on its behalf, a paid-up
			nonexclusive, irrevocable worldwide license in said
			article to reproduce, prepare derivative works, distribute
			copies to the public, and perform publicly and display
			publicly, by or on behalf of the Government. The Department of Energy will provide public access to these results of federally sponsored research in accordance with the DOE Public Access Plan. http://energy.gov/downloads/doe-public-access-plan. }}
	\normalsize
\end{flushright}

\end{document}